\newcommand{\Bmachine}{\;\textbf{machine}}
\newcommand{\Bvariables}{\;\textbf{variables}}
\newcommand{\Binvariant}{\;\textbf{invariant}}
\newcommand{\Bvariant}{\;\textbf{variant}}
\newcommand{\Bevents}{\;\textbf{events}}
\def\Bwhen{\;\mbox{\bf when}\;}
\def\Bthen{\;\mbox{\bf then}\;}
\def\Bend{\;\mbox{\bf end}}
\def\Brefines{\;\mbox{\bf refines}}
\def\Bstatus{\;\mbox{\bf status}}
\def\comp{;}
\newcommand{\traces}{traces}
\newcommand{\divergences}{divergences}
\newcommand{\brefinedby}{\preccurlyeq}
\def\emptyset{\{ \}}
\newtheorem{theorem}{Theorem}[section]
\newtheorem{lemma}[theorem]{Lemma}
\newtheorem{definition}[theorem]{Definition}
\newtheorem{proof}[theorem]{Proof}
\title{A CSP account of Event-B refinement}
\author{Steve Schneider 
\institute{Department of Computing, University of Surrey} 
\email{S.Schneider@surrey.ac.uk}
\and
Helen Treharne \institute{Department of Computing, University of Surrey} 
\email{H.Treharne@surrey.ac.uk}
\and
Heike Wehrheim \institute{Department of Computer Science, University of Paderborn}
\email{wehrheim@uni-paderborn.de}
}
\begin{document}

\maketitle


\begin{abstract}
Event-B provides a flexible framework for stepwise system development via refinement. The framework supports steps for (a) refining events (one-by-one), (b) splitting events (one-by-many), and (c) introducing new events. In  each of the steps events can moreover possibly be anticipated or convergent. All such steps are accompanied with precise proof obligations. Still, it remains unclear what the exact relationship - in terms of a {\em behaviour-oriented semantics} - between an Event-B machine and its refinement is. In this paper, we give a CSP account of Event-B refinement, with a treatment for the first time of splitting events and of anticipated events. To this end, we define a CSP semantics for Event-B and show how the different forms of Event-B refinement can be captured as CSP refinement.  
\end{abstract}

\section{Introduction}

Event-B \cite{Abrial09} provides a framework for system development through stepwise refinement.  Individual refinement steps are verified with respect to their proof obligations, and the transitivity of refinement ensures that the final system description is a refinement of the initial one.  The refinement process allows new events to be introduced through the refinement process, in order to provide the more concrete implementation details necessary as refinement proceeds. 

The framework allows for a great deal of flexibility as to cover a broad range of system developments. The recent book \cite{Abrial09} comprising case studies from rather diverse areas shows that this goal is actually met. The flexibility is a result of the different ways of dealing with events during refinement. At each step existing events of an Event-B machine need to be refined. This can be achieved by (a) simply keeping the event as is, (b) refining it into another event, possibly because of a change of the state variables, or (c) splitting it into several events\footnote{A fourth option is merging of events which we do not consider here.}. Furthermore, every refinement step allows for the introduction of new events. To help reasoning about divergence, events are in addition classified as ordinary, {\em anticipated} or {\em convergent}. Anticipated and convergent events both introduce new details into the machine specification. Convergent events must not be executed forever, while for anticipated events this condition is deferred to later refinement steps. All of these steps come with precise proof obligations; appropriate tool support helps in discharging these \cite{AbrialBHV08,DBLP:journals/sttt/AbrialBHHMV10}.   Event-B is essentially a {\em state-based} specification technique, and proof obligations therefore reason about predicates on states.   

Like Event-B, CSP comes with a notion of refinement.  In order to understand their relationship, these two refinement concepts need to be set in a single framework. Both formalisms moreover support a variety of different forms of refinement: Event-B by means of several proof obligations related to refinement, out of which the system designer chooses an appropriate set; CSP by means of its different semantic domains of traces, failures and divergences.   The aim of this paper is to give a precise account of Event-B refinement in terms of CSP's behaviour-oriented  process refinement.  This will also provide the underlying results that support refinement in the combined formalism Event-B$\|$CSP. 
Our work is thus in line with previous studies relating state-based with behaviour-oriented refinement (see e.g.\ \cite{BolDav02,Derrick02c,BoitenD09}). It turns out that CSP supports an approach to refinement consistent with that of Event-B. It faithfully reflects all of Event-B's possibilities for refinement, including splitting events and new events. It moreover also deals with the Event-B approach of  anticipated events as a means to defer consideration of divergence-freedom.  Our results involves support for individual refinement steps as well as for the resulting refinement chain. 


\medskip

\noindent The paper is structured as follows. The next section introduces the necessary background on Event-B and CSP.  Section 3 gives the CSP semantics for Event-B based on weakest preconditions. In Section 4 we precisely fix the notion of refinement used in this paper, both for CSP and for Event-B, and Section 5 will then set these definitions in relation. It turns out that the appropriate refinement concept of CSP in this combination with Event-B is infinite-traces-divergences refinement. The last section concludes. 

\section{Background} 

We start with a short introduction to CSP and Event-B. For more detailed information see \cite{schneider99} and \cite{Abrial09} respectively.

\subsection{CSP}

CSP, Communicating Sequential Processes, introduced by Hoare \cite{hoare85} is a formal specification language aiming at the description of communicating processes. A process is characterised by the events it can engage in and their ordering. Events will in the following be denoted by $a_1, a_2, \ldots$ or $evt0, evt1, \ldots$. Process expressions are built out of events using a number of composition operators. In this paper, we will make use of just three of them:  {\em interleaving} ($P_1 ||| P_2$), executing two processes in parallel without any synchronisation; {\em hiding} ($P \hide N$), making a set $N$ of events internal; and {\em renaming} ($f(P)$ and $f^{-1}(P)$), changing the names of events according to a renaming function $f$. If $f$ is a non-injective function, $f^{-1}(P)$ will offer a choice of events $b$ such that $f(b) = a$ whenever $P$§ offers event $a$. 

Every CSP process $P$ has an alphabet $\alpha P$.  Its semantics is given using the Failures/Divergences/Infinite Traces semantic model for CSP.  This is presented as ${\cal U}$ in \cite{roscoe:tpc} or FDI in \cite{schneider99}.
The semantics of a process can be understood in terms of four sets, $T, F, D, I$, which are respectively the traces, failures, divergences, and infinite traces of $P$.  These are understood as observations of possible executions of the process $P$, in terms of the events from $\alpha P$ that it can engage in.

Traces are finite sequences of events from $P$'s alphabet: $tr \in \alpha P^*$.  The set $traces(P)$ represents the possible finite sequences of events that $P$ can perform.  
Failures will not be considered in this paper and are therefore not explained here.

Divergences are finite sequences of events on which the process might diverge: perform an infinite sequence of internal events (such as an infinite loop) at some point during or at the end of the sequence.  The set $divergences(P)$ is the set of all possible divergences for $P$.  
Infinite traces $u \in \alpha P^{\omega}$ are infinite sequences of events.  The set $infinites(P)$ is the set of infinite traces that $P$ can exhibit.  For technical reasons it also contains those infinite traces which have some prefix which is a divergence.

\begin{definition}
A process $P$ is {\em divergence-free} if $divergences(P) = \{ \}$.
\end{definition}

\noindent We use $tr$ to refer to finite traces.  These can also be written explicitly as $\langle a_1, a_2, \ldots, a_n \rangle$.  The empty trace is $\langle \rangle$, concatenation of traces is written as $tr_1 \cat tr_2$.  We use $u$ to refer to infinite traces.  Given a set of events $A$, the {\em projections} $tr \project A$ and $u \project A$ are the traces restricted to only those events in $A$.  Note that $u \project A$ might be finite, if only finitely many $A$ events appear in $u$.  Conversely, $tr \hide A$ and $u \hide A$ are those traces with the events in $A$ removed.  The length operator $\# tr$ and $\# u$ gives the length of the trace it is applied to. As a first observation, we get the following.

\begin{lemma} \label{lem:infdivfree}
If $P$ is divergence-free, and for any infinite trace $u$ of $P$ we have $\#(u \hide A) = \infty$, then $P \hide A$ is divergence-free.
\end{lemma}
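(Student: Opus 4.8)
The plan is to appeal directly to the standard semantic law describing how divergences arise under hiding in the Failures/Divergences/Infinite-Traces model. In that model a trace $s$ over the reduced alphabet $\alpha P \setminus A$ is a divergence of $P \hide A$ precisely when it can be obtained, up to extension, in one of exactly two ways: either from a pre-existing divergence of $P$, or from an infinite trace of $P$ whose hidden projection is finite. Concretely, I would work from
\[
divergences(P \hide A) = \{ (s \hide A) \cat t \mid s \in divergences(P) \} \cup \{ (u \hide A) \cat t \mid u \in infinites(P),\ \#(u \hide A) < \infty \},
\]
where $t$ ranges over finite traces over $\alpha P \setminus A$. The whole argument then reduces to showing that, under our two hypotheses, both sets on the right-hand side are empty.

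The first set is immediately empty, since $P$ being divergence-free means $divergences(P) = \emptyset$, so there is no $s$ to draw upon. For the second set, its defining condition asks for an infinite trace $u \in infinites(P)$ with $\#(u \hide A) < \infty$ --- an execution of $P$ whose visible content is finite, so that after finitely many observable events $P$ performs an unbounded run of $A$-events, which hiding collapses into an infinite internal computation. But the hypothesis states exactly that every infinite trace $u$ of $P$ satisfies $\#(u \hide A) = \infty$, directly contradicting $\#(u \hide A) < \infty$. Hence no such $u$ exists and the second set is empty as well. We conclude $divergences(P \hide A) = \emptyset$, i.e.\ $P \hide A$ is divergence-free.

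The only genuine subtlety --- and the step I would be most careful about --- is pinning down the exact form of the hiding law, in particular the characterisation of the \emph{new} divergences that hiding introduces. In the pure failures/divergences model these would have to be identified via an infinitary, K\"onig-style argument over an infinite tree of $A$-labelled computations; in the FDI model used here they are captured cleanly by the infinite traces of $P$, which is precisely why the lemma is phrased in terms of $infinites(P)$ and $\#(u \hide A)$. Once the correct form of the law is in place, each disjunct is killed by exactly one hypothesis and nothing further is required. I would also double-check that $u \hide A$ is read as the projection deleting the $A$-events (so that finiteness of $\#(u \hide A)$ really does mean ``only finitely many visible events survive''), matching the notation fixed immediately before the lemma.
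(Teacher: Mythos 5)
Your proof is correct and takes the same route as the paper, which simply states that the result ``follows immediately from the semantics of the hiding operator''; you have just unpacked that semantics explicitly, showing that both sources of divergence for $P \hide A$ (pre-existing divergences of $P$, and infinite traces of $P$ with finite visible projection) are ruled out by the two hypotheses. No gaps.
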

\begin{proof}
Follows immediately from the semantics of the hiding operator. 
\end{proof}

\noindent Later, we furthermore use {\em specifications} on traces or, more generally, on CSP processes. Specifications are given in terms of predicates.  If $S$ is a predicate on a particular semantic element, then we write $P \sat S$ to denote that all relevant elements in the semantics of $P$ meet the predicate $S$.  For example, if $S(u)$ is a predicate on infinite traces, then $P \sat S(u)$ is equivalent to $\forall u \in infinites(P) \,.\, S(u)$.

\subsection{Event-B}

Event-B~\cite{Abrial09,rodin:d7} is a state-based specification formalism based
on set theory. Here we describe the basic
parts of an Event-B machine required for this paper; a full description of the formalism can be found in \cite{Abrial09}.  

A machine specification usually defines a list of variables, given as $v$.  Event-B also in general allows sets $s$ and constants $c$.  However, for our purposes the treatment of elements such as sets and constants are independent of the results of this paper, and so we will not include them here. However, they can be directly incorporated without affecting our results.

There are many clauses that may appear in Event-B machines, and we concentrate on those clauses concerned with the state. 
We will therefore describe a machine $M_0$ with a list of state variables $v$, a state invariant $I(v)$, and a set of events $evt0, \ldots$ to update the state (see left of Fig.\ref{fig:machine}).  Initialisation is a special event $init0$. 

\begin{figure}
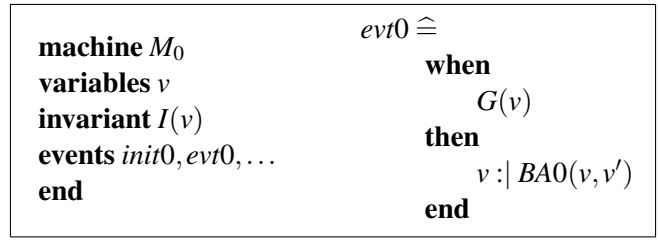
 
\begin{center} 
\framebox{$\begin{array}{l}
\Bmachine \; M_0 \\
\Bvariables \; v \\
\Binvariant \; I(v) \\
\Bevents \; init0, evt0, \ldots \\
\Bend \\ 
\end{array} \qquad 
\begin{array}{l}
evt0 \defs \\
\qquad \Bwhen \\
\qquad \qquad G (v) \\
\qquad \Bthen \\
\qquad \qquad v :| BA0(v,v') \\
\qquad \Bend \\ 
\end{array}$}
\end{center}
\caption{Template of an Event-B machine and an event.}
\label{fig:machine}
\end{figure} 

A machine $M_0$ will have various proof obligations on it.  These include consistency obligations, that events preserve the invariant.  They can also include (optional) deadlock-freeness obligations: that at least one event guard is always true.

Central to an Event-B description is
the definition of the events, each consisting of a {\em guard} $G(v)$ over the variables, and a {\em body}, usually written as an assignment $S$ on the variables. 
  The body defines a {\em before-after predicate} $BA(v, v')$ 
describing changes of variables upon event execution, in terms of the relationship between the variable values before ($v$) and after ($v'$).  The body can also be written as $v :| BA(v,v')$, whose execution assigns to $v$ any value $v'$ which makes the predicate $BA(v,v')$ true (see right of Fig.~\ref{fig:machine}). 

\section{CSP semantics for Event-B machine}

Event-B machines are particular instances of action systems, so Morgan's CSP semantics for action systems \cite{Morgan90} allows traces, failures, and divergences to be defined for Event-B machines, in terms of the sequences of events that they can and cannot engage in.  Butler's extension to handle unbounded nondeterminism \cite{butlerphd} defines the infinite traces for action systems. These together give a way of considering Event-B machines as CSP processes, and treating them within the CSP semantic framework.  In this paper we use the infinite traces model in order to give a proper treatment of divergence under hiding.  This is required to establish our main result concerning divergence-freedom under hiding of new events.  Consideration of finite traces alone is not sufficient for this result.

Note that the notion of {\em traces} for machines is different to that presented in \cite{Abrial09}, where traces are considered as sequences of {\em states} rather than our treatment of traces as sequences of {\em events}.


The CSP semantics is based on the weakest precondition semantics of events. Let $S$ be a statement (of an event). Then $[S]R$ denotes the weakest precondition for statement $S$ to establish postcondition $R$. Weakest preconditions for events of the form ``$\Bwhen\; G(v)\; \Bthen\; S(v)\; \Bend$'' are given by considering them as guarded commands:
\begin{eqnarray*}
[\Bwhen\; G(v)\; \Bthen\; S(v)\; \Bend]P & = & G(v) \implies [S(v)]P
\end{eqnarray*}
Events in the general form ``$\Bwhen\; G(v)\; \Bthen\; v :| BA(v,v')\; \Bend$'' have a weakest precondition semantics as follows:
\begin{eqnarray*}
[\Bwhen\; G(v)\; \Bthen\; v :| BA(v,v')\; \Bend]P & = & G(v) \implies \forall x . (BA(v,x) \implies P[x/v])
\end{eqnarray*}
Observe that for the case $P = true$ we have
\begin{eqnarray*}
[\Bwhen\; G(v)\; \Bthen\; v :| BA(v,v')\; \Bend]true & = & true
\end{eqnarray*}

\noindent Based on the weakest precondition, we can define the traces, divergences and infinite traces of an Event-B machine\footnote{Failures can be defined as well but are omitted since they are not needed for our approach.}. 

\begin{description}
\item[Traces] The traces of a machine $M$ are those sequences of events $tr = \trace{a_1,\ldots,a_n}$ which are possible for $M$ (after initialisation $init$): those that do not establish {\em false}:
\begin{eqnarray*}
traces(M) & = & \{ tr \mid \neg[init;  \negthickspace \negthinspace tr]false \}
\end{eqnarray*}
Here, the weakest precondition on a sequence of events is the weakest precondition of the sequential composition of those events: $[\trace{a_1,\ldots,a_n}]P$ is given as $[a_1 ; \ldots\, ; a_n]P = [a_1](\ldots ([a_n]P)\ldots)$.
\item[Divergences]
A sequence of events $tr$ is a divergence if the sequence of events is not guaranteed to terminate, i.e. $\neg[init;tr]true$.  Thus
\begin{eqnarray*}
divergences(M) & = & \{ tr \mid \neg[init;  \negthickspace \negthinspace tr]true \}
\end{eqnarray*}
Note that any Event-B machine $M$ with events of the form $evt$ given above is divergence-free.  This is because $[evt]true = true$ for such events (and for $init$), and so $[init;tr]true = true$.  Thus no potential divergence $tr$ meets the condition $\neg[init;tr]true$.  
\item[Infinite Traces]
The technical definition of infinite traces is given in \cite{butlerphd}, in terms of least fixed points of predicate transformers on infinite vectors of predicates.  Informally, an infinite sequence of events $u = \langle u_0, u_1, \ldots \rangle$ is an infinite trace of $M$ if there is an infinite sequence of predicates $P_i$ such that $\neg [init](\neg P_0)$ (i.e. some execution of $init$ reaches a state where $P_0$ holds), and $P_i \implies \neg[u_i](\neg P_{i+1})$ for each $i$ (i.e. if $P_i$ holds then some execution of $u_i$ can reach a state where $P_{i+1}$ holds).  
\begin{eqnarray*}
infinites(M) & = & \{ u \mid \mbox{there is a sequence}  \langle P_i \rangle_{i \in \nat} \;.\; \begin{array}[t]{l}
\neg[init](\neg P_0) \land \\
\mbox{for all}\;  i \;.\; P_i \implies \neg[u_i](\neg P_{i+1}) \; \}
\end{array}
\end{eqnarray*}
\end{description}

\noindent These definitions give the CSP Traces/Divergences/Infinite Traces semantics of Event-B machines in terms of the weakest precondition semantics of events.

\section{Refinement}

In this paper, we intend to give a CSP account of Event-B refinement. The previous section provides us with a technique for relating Event-B machines to the semantic domain of CSP processes. Next, we will briefly rephrase the refinement concepts in CSP and Event-B before explaining Event-B refinement in terms of CSP refinement.  

\subsection{CSP refinement}

Based on the semantic domains of traces, failures, divergences and infinite traces, different forms of refinement can be given for CSP. The basic idea underlying these concepts is - however - always the same: the refining process should not exhibit a behaviour which was not possible in the refined process. The different semantic domains then supply us with different forms of ``behaviour''. In this paper we will use the following refinement relation, based on traces and divergences: 
\begin{eqnarray*}
P \refinedby_{TDI} Q & \defs & 
\hspace*{-2mm}
\begin{array}[t]{l}
\ traces(Q) \subseteq traces(P) \\
{} \land divergences(Q) \subseteq divergences(P) \\
{} \land infinites(Q) \subseteq infinites(P) 
\end{array} 
\end{eqnarray*}

\noindent Refinement in Event-B also allows for the possibility of introducing new events. To capture this aspect in CSP, we need a way of incorporating this into process refinement. As a first idea, we could {\em hide} the new events in the refining process. This potentially introduces divergences, namely, when there is an infinite sequence of new events in the infinite traces. In order to separate out consideration of divergence from reasoning about traces, we will use $P \interleave RUN_N$ as a lazy abstraction operator instead. $RUN_N$ defines a divergence free process capable of executing any order of events from the set $N$. This will enable us to characterise Event-B refinement introducing new events in CSP terms.  The following lemma gives the relationship between refinement involving interleaving, and refinement involving hiding.

\begin{lemma} \label{lem:abs1}
If $P_0 \interleave RUN_N \refinedby_{TDI} P_1$ and $N \inter \alpha P_0 = \{ \}$ and $P_1 \hide N$ is divergence-free, then $P_0 \refinedby_{TDI} P_1 \hide N$.
\end{lemma}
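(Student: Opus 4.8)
The plan is to verify the three inclusions in the definition of $\refinedby_{TDI}$ separately, namely $traces(P_1 \hide N) \subseteq traces(P_0)$, $divergences(P_1 \hide N) \subseteq divergences(P_0)$, and $infinites(P_1 \hide N) \subseteq infinites(P_0)$. The workhorse throughout will be the hypothesis $N \inter \alpha P_0 = \{\}$: since $RUN_N$ has alphabet $N$ and is divergence-free, and its events are disjoint from those of $P_0$, the events of any behaviour $w$ of $P_0 \interleave RUN_N$ partition into $\alpha P_0$-events and $N$-events, so that $w \hide N = w \project \alpha P_0$. Interleaving with disjoint alphabets therefore lets me project each behaviour onto its $P_0$-component: if $w \in traces(P_0 \interleave RUN_N)$ then $w \hide N = w \project \alpha P_0 \in traces(P_0)$, and if $w \in infinites(P_0 \interleave RUN_N)$ with $w \hide N$ infinite then $w \hide N = w \project \alpha P_0 \in infinites(P_0)$. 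These two facts are what I would use to discharge the traces and infinite-traces obligations.

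The divergences inclusion is immediate, since $P_1 \hide N$ is divergence-free by assumption and $\{\} \subseteq divergences(P_0)$. For the traces inclusion I would take any $tr' \in traces(P_1 \hide N)$, unfold the hiding semantics to get $tr \in traces(P_1)$ with $tr' = tr \hide N$, then apply $traces(P_1) \subseteq traces(P_0 \interleave RUN_N)$ from the hypothesis together with the projection fact above to conclude $tr' = tr \hide N \in traces(P_0)$.

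The hard part will be the infinite-traces inclusion, and this is where divergence-freedom of $P_1 \hide N$ does the essential work. In the Infinite-Traces model, hiding can in general feed infinite traces into $P_1 \hide N$ in two awkward ways that would wreck the projection argument: as divergence-closure extensions of divergences of $P_1 \hide N$, and as images $u \hide N$ of infinite traces $u$ of $P_1$ for which $u \hide N$ is only finite (an infinite tail of hidden $N$-events). My key step is to note that divergence-freedom rules out both: there are no divergences at all, and no infinite trace $u$ of $P_1$ can have $u \hide N$ finite, since such a $u$ would witness an infinite run of hidden events after the finite trace $u \hide N$ and hence make $u \hide N$ a divergence of $P_1 \hide N$. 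Hence every $u' \in infinites(P_1 \hide N)$ has the form $u' = u \hide N$ with $u \in infinites(P_1)$ and $u'$ infinite. Applying $infinites(P_1) \subseteq infinites(P_0 \interleave RUN_N)$ then gives $u \in infinites(P_0 \interleave RUN_N)$, and since $u' = u \hide N = u \project \alpha P_0$ is infinite, the projection fact yields $u' = u \project \alpha P_0 \in infinites(P_0)$. Collecting the three inclusions establishes $P_0 \refinedby_{TDI} P_1 \hide N$.
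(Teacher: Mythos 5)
Your proposal is correct and follows essentially the same route as the paper's proof: the same three-way case split, the divergences case dispatched by hypothesis (3), and the traces and infinite-traces cases handled by unfolding the hiding semantics and projecting through the disjoint-alphabet interleaving. Your explicit observation that divergence-freedom of $P_1 \hide N$ is what excludes both divergence-closure traces and infinite traces $u$ of $P_1$ with $u \hide N$ finite is a worthwhile sharpening of a step the paper leaves implicit, but it does not change the argument.
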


\noindent {\bf Proof:} Assume  that (1) $P_0 \interleave RUN_N \refinedby_{TDI} P_1$, (2) $N \inter \alpha P_0 = \{ \}$  and (3) $P_1 \hide N$ is divergence-free. We need to show that the (finite and infinite) traces as well as divergences of $P_1 \hide N$ are contained in those of $P_0$. 
\begin{description}
\item[Traces] Let $tr \in traces(P_1 \hide N)$. By semantics of hiding there is some $tr' \in traces(P_1)$ s.t.\ $tr' \hide N = tr$. By (1) $tr' \in traces(P_0 \interleave RUN_N)$. By (2) and the semantics of $\interleave$ we get $tr' \hide N \in traces(P_0)$ and thus $tr \in traces(P_0)$. 
\item[Divergences] By (3) $divergences(P_1 \hide N) = \emptyset$, thus nothing to be proven here.
\item[Infinites] Let $u \in infinites(P_1 \hide N)$. By the semantics of hiding there is some $u' \in infinites(P_1)$ such that $u' \hide N = u$ and $\# (u' \hide N) = \infty$. By (1) $u' \in infinites(P_0 \interleave RUN_N)$ and by (2) and semantics of interleave we get $u' \hide N = u \in infinites(P_0)$.
\end{description} \hfill $\Box$

\subsection{Event-B refinement} 

In Event-B, the (intended) refinement relationship between machines is directly written into the machine definitions. As a consequence of writing a refining machine, a number of proof obligations come up.  Here, we assume a machine and its refinement to take the following form:

\smallskip 
\begin{minipage}{0.45\textwidth}
$
\begin{array}{l}
\Bmachine \; M_0 \\
\Bvariables \; v \\
\Binvariant \; I(v) \\
\Bevents \; init0, evt0, \ldots \\
\Bend \\ \\ \\
\end{array}
$
\end{minipage}
\begin{minipage}{0.45\textwidth}
$
\begin{array}{l}
\Bmachine \; M_1 \\
\Brefines \; M_0 \\
\Bvariables \; w \\
\Binvariant \; J(v,w) \\
\Bevents \; init1, evt1, \ldots \\
\Bvariant \; V(w) \\
\Bend
\end{array}
$
\end{minipage}

\medskip
\noindent The machine $M_0$ is actually refined by machine $M_1$, written $M_0 \brefinedby M_1$, if the given {\em linking invariant} $J$ on the variables of the two machines is established by their initialisations, and  preserved by all events, in the sense that any event of $M_1$ can be matched by an event of $M_0$ (or $skip$ for newly introduced events) to maintain $J$.  This is the standard notion of downwards simulation data refinement \cite{Derrick01a}. We next look at this in more detail, and in particular give the proof obligations associated to these conditions.

\begin{figure}
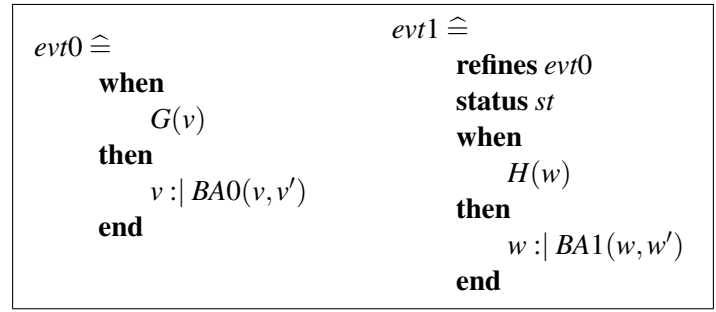

  \begin{center}
    \framebox{$
\begin{array}{l}
evt0 \defs \\
\qquad \Bwhen \\
\qquad \qquad G (v) \\
\qquad \Bthen \\
\qquad \qquad v :| BA0(v,v') \\
\qquad \Bend \\ \\ 
\end{array}
\qquad 
\begin{array}{l}
evt1  \defs \\
\qquad \Brefines \; evt0\\
\qquad \Bstatus \; st\\
\qquad \Bwhen \\
\qquad \qquad H(w) \\
\qquad \Bthen \\
\qquad \qquad w :| BA1(w,w') \\
\qquad \Bend
\end{array}
$
}
\end{center}
\caption{An event and its refinement}
\label{fig:events}
\end{figure}

First of all, we need to look at events again. Figure \ref{fig:events} gives the shape of an event and its refinement. We see that an event in the refinement now also gets a {\em status}. The status can be ordinary (also called {\em remaining}), or {\em anticipated} or {\em convergent}. Convergent events are those which must not be executed forever, and anticipated events are those that will be made convergent at some later refinement step.  New events must either have status anticipated or convergent. Both of these introduce further proof obligations: to prevent execution ``forever'' the refining machine has to give a variant $V$ (see above in $M_1$), and $V$ has to be decreased by every convergent event and must not be increased by anticipated events. 

We now describe each of the proof obligations in turn.  We have simplified them from their form in \cite{rodin:d7} by removing explicit references to sets and constants.  Alternative forms of these proof obligations are given in \cite[Section 5.2: Proof Obligation Rules]{Abrial09}.  

\begin{description}
\item[FIS\_REF: Feasibility]
Feasibility of an event is the property that, if the event is enabled (i.e. the guard is true), then there is some after-state.  In other words, the body of the event will not block when the event is enabled.

The rule for feasibility of a concrete event is:  
\begin{center}
$\Obligation
{
I (v) \land J (v, w) \land H (w)
}
{
\exists w' . BA1(w, w') 
}
{FIS\_REF}$ 
\end{center}

\item[GRD\_REF: Guard Strengthening]

This requires that when a concrete event is enabled, then so is the abstract one.  The rule is:
\begin{center}
$\Obligation
{
I (v) \land J (v, w) \land  H (w) 
}
{
G(v)
}
{GRD\_REF}$ 
\end{center}

\item[INV\_REF: Simulation]
This ensures that the occurrence of events in the concrete machine can be matched in the abstract one (including the initialization event).  New events are treated as refinements of $skip$.  The rule is:
\begin{center}
$\Obligation
{
I (v) \land J (v, w) \land H (w) \land BA1(w,w') 
}
{
\exists v' . (BA0(v,v') \land J(v',w'))
}
{INV\_REF}$ 
\end{center}
\end{description}

\noindent Event-B also allows a variety of further proof obligations for refinement, depending on what is appropriate for the application.  The two parts of the variant rule WFD\_REF below must hold respectively for all convergent and anticipated events, including all newly-introduced events.  

\begin{description}
\item[WFD\_REF: Variant]
This rule ensures that the proposed variant $V$ satisfies the appropriate properties: that it is a natural number, that  it decreases on occurrence of any convergent event, and that it does not increase on occurrence of any anticipated event:

\begin{center}
$\Obligation
{
I (v) \land J(v,w) \land H(w) \land BA1(w,w') 
}
{
V(w) \in \nat \land V(w') < V(w)
}
{\begin{tabular}{ll}WFD\_REF \\ (convergent event) \end{tabular}}$ 
\end{center}

\begin{center}
$\Obligation
{
I (v) \land J(v,w) \land H(w) \land BA1(w,w') 
}
{
V(w) \in \nat \land V(w') \le V(w)
}
{\begin{tabular}{ll}WFD\_REF \\ (anticipated event) \end{tabular}}$ 
\end{center}

\end{description}

\medskip
\noindent We will use the refinement relation $M_0 \brefinedby M_1$ to mean that the four proof obligations $FIS\_REF$, $GRD\_REF$, $INV\_REF$, and $WFD\_REF$ hold between abstract machine $M_0$ and concrete machine $M_1$.  

\section{Event-B refinement as CSP refinement} 

With these definitions in place, we can now look at our main issue, the characterisation of Event-B refinement via CSP refinement. Here, we in particular need to look at the different forms of events in Event-B during refinement.  
Events can have status convergent or anticipated, or might have no status.  This partitions the set of events of $M$ into three sets:  anticipated $A$, convergent $C$, and remaining events $R$ (neither anticipated nor convergent).  The alphabet of $M$, the set of all possible events, is thus given by $\alpha M = A \union C \union R$. In the CSP refinement, these will take different roles. 

Now consider an Event-B Machine $M_0$ and its refinement $M_1$: $M_0 \preccurlyeq M_1$.  
The machine $M_0$ has anticipated events $A_0$, convergent events $C_0$, and remaining events $R_0$, and $M_1$ similarly has event sets $A_1$, $C_1$, and  $R_1$.   Each event $ev_1$ in $M_1$ either refines a single event $ev_0$ in $M_0$ (indicated by the clause `refines $ev_0$' in the description of $ev_1$) or does not refine any event of $M_0$.  The set of new events $N_1$ is those events which are not refinements of events in $M_0$.

$M_0 \preccurlyeq M_1$ thus induces a partial surjective function $f_1 : \alpha M_1 \psurj \alpha M_0 $ where  $f_1(ev_1) = ev_0$  $\Leftrightarrow$ $ev_1\; \mbox{refines}\; ev_0$.   Observe that $\alpha M_1$ is partitioned by $f_1^{-1}(\alpha M_0)$ and $N_1$. 
The rules for refinement between events in Event-B impose restrictions on these sets: 
\begin{enumerate}
\item each event of $M_0$ is refined by at least one event of $M_1$;
\item each new event in $M_1$ is either anticipated or convergent;
\item each event in $M_1$ which refines an anticipated event of $M_0$ is itself either convergent or anticipated;
\item refinements of convergent or remaining events of $M_0$ are remaining in $M_1$, i.e. they are not given a status. 
\end{enumerate}

\noindent The conditions imposed by the rules are formalised as follows:
\begin{enumerate}
\item $ran(f_1) = A_0 \union C_0 \union R_0$;
\item $N_1 \subseteq A_1 \union C_1$;
\item $f_1^{-1}(A_0) \subseteq A_1 \union C_1$;
\item $f_1^{-1}(C_0 \union R_0) = f_1^{-1}(C_0) \union f_1^{-1}(R_0) = R_1$. 
\end{enumerate}

\begin{figure}[t]
\begin{center}
\def\svgwidth{0.5\textwidth}
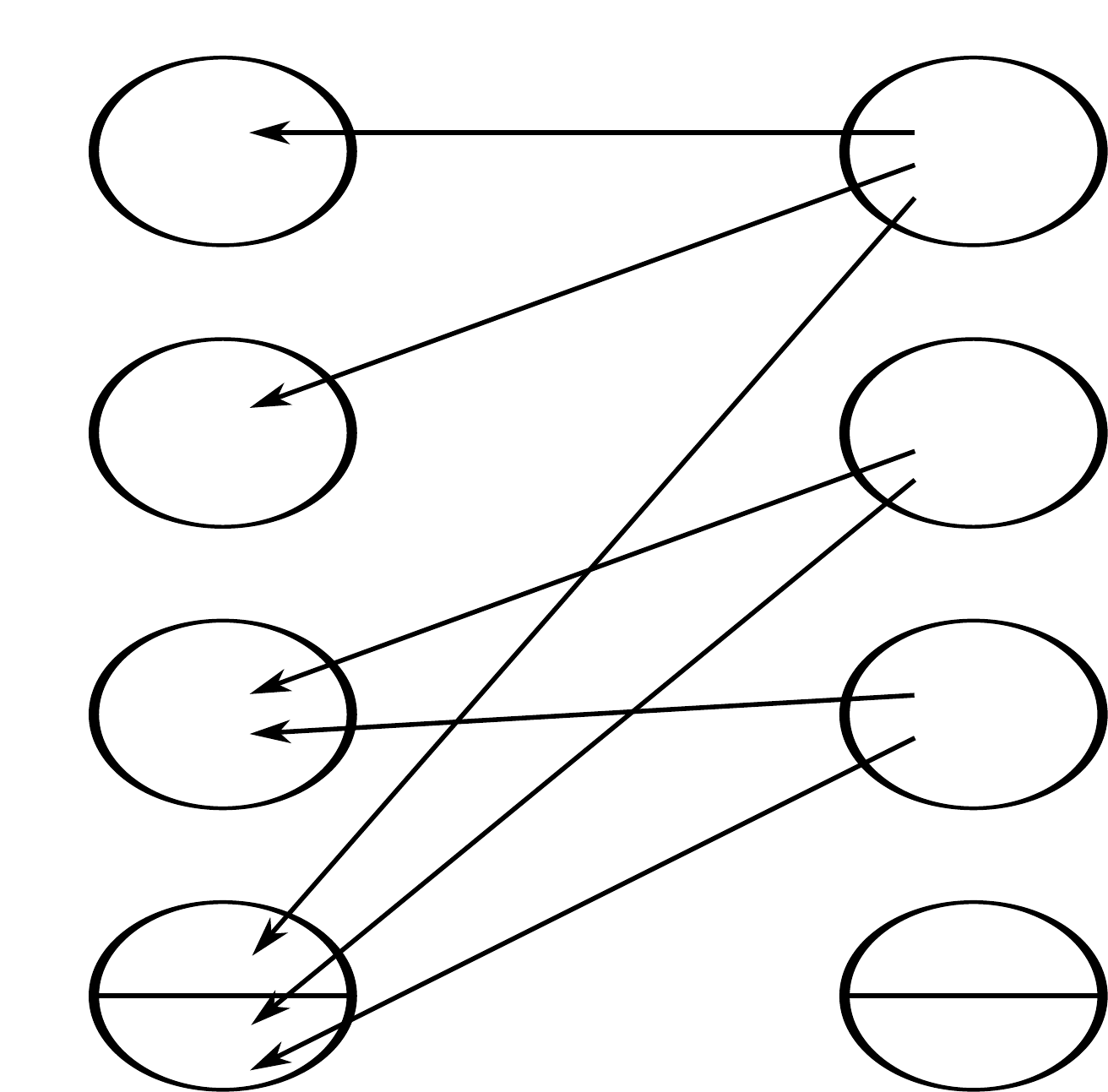
\end{center}
\caption{Relationship between events in a refinement step: $f_1$ maps events in $M_1$ to events in $M_0$ that they refine.} \label{fig:f}
\end{figure}

\noindent These relationships between the classes of events are illustrated in Figure~\ref{fig:f}. 

\subsection{New events}

For the new events arising in the refinement, we can use the lazy abstraction operator via the $RUN$ process to get our desired result, disregarding the issue of divergence for a moment. The following lemma gives our first result on the relationship between Event-B refinement and CSP refinement.  

\begin{lemma} \label{lem:btref0}
If $M_0 \brefinedby M_1$ and the refinement introduces new events $N_1$ and uses the mapping $f_1$, then $f_1^{-1}(M_0)\interleave RUN_{N_1} \refinedby_{TDI} M_1$. 
\end{lemma}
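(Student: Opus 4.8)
The plan is to prove the three set inclusions that constitute $\refinedby_{TDI}$: namely $traces(M_1) \subseteq traces(f_1^{-1}(M_0)\interleave RUN_{N_1})$ together with the analogous inclusions for divergences and for infinite traces. The engine driving all three is the downward simulation that the Event-B proof obligations already establish: the linking invariant $J(v,w)$ is set up by the paired initialisations and, by $INV\_REF$, every step of $M_1$ (a refined event $ev_1$ matched by $f_1(ev_1)$ in $M_0$, or a new event in $N_1$ matched by $skip$) can be mirrored by a step of $M_0$ that re-establishes $J$, while $GRD\_REF$ guarantees that whenever the concrete guard $H(w)$ holds the abstract guard $G(v)$ holds too. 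Since $\alpha M_1$ is partitioned by $f_1^{-1}(\alpha M_0)$ and $N_1$, the interleaving on the left has disjoint component alphabets, so I can reason about the two projections of each behaviour independently, exactly in the style used for Lemma~\ref{lem:abs1}.

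For the traces inclusion I take $tr \in traces(M_1)$ and split it into $tr \project N_1$ and $tr \project f_1^{-1}(\alpha M_0)$. The new-event part is trivially a trace of $RUN_{N_1}$. For the refined part I walk along $tr$ using the simulation: $init1$ reaches a $J$-state matching $init0$; each refined event $ev_1$ is matched by $f_1(ev_1)$, enabled by $GRD\_REF$ and preserving $J$ by $INV\_REF$; each new event is matched by $skip$, which keeps $v$ fixed and preserves $J$ by the $skip$-reading of $INV\_REF$. Hence $f_1(tr \project f_1^{-1}(\alpha M_0))$ is a trace of $M_0$, and since $f_1^{-1}(M_0)$ offers every $ev_1$ with $f_1(ev_1)=ev_0$ whenever $M_0$ offers $ev_0$, this makes $tr \project f_1^{-1}(\alpha M_0) \in traces(f_1^{-1}(M_0))$. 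Disjointness of the alphabets together with the interleaving semantics recombines the two projections to give $tr$ in the left-hand process.

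The divergences inclusion is immediate: for the general event shape $[ev]true = true$, so $[init1;tr]true = true$ for every $tr$ and hence $divergences(M_1) = \emptyset$, leaving nothing to contain. For infinite traces I take $u \in infinites(M_1)$ with a witnessing predicate sequence $\langle Q_i \rangle_{i\in\nat}$ satisfying $\neg[init1](\neg Q_0)$ and $Q_i \implies \neg[u_i](\neg Q_{i+1})$. If $u$ carries only finitely many refined events, then $u \project f_1^{-1}(\alpha M_0)$ is finite and the trace argument already places it in $traces(f_1^{-1}(M_0))$, while $u \project N_1$ is an infinite behaviour of $RUN_{N_1}$. If $u$ carries infinitely many refined events, at positions $k_0 < k_1 < \cdots$ with abstract images $a_j = f_1(u_{k_j})$, I build an abstract witness by setting $P_j(v) \defs I(v) \land \exists w.\,(J(v,w) \land Q_{k_j}(w))$ and verify $\neg[init0](\neg P_0)$ and $P_j \implies \neg[a_j](\neg P_{j+1})$. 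Unfolding the latter to $G(v) \land \exists v'.\,(BA0(v,v') \land P_{j+1}(v'))$, the guard $G(v)$ comes from $GRD\_REF$ (as $Q_{k_j}$ forces $H(w)$), the after-state $v'$ comes from $INV\_REF$ applied to the $u_{k_j}$-step, and $P_{j+1}(v')$ is reached by threading the concrete state through the intervening $N_1$-events, each preserving $J$ with $v'$ held fixed. Thus $f_1(u \project f_1^{-1}(\alpha M_0))$ is an infinite trace of $M_0$, and the interleaving semantics recombines it with the $RUN_{N_1}$ behaviour to place $u$ in the left-hand process.

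I expect this last construction to be the main obstacle. The finite-trace and divergence parts are routine simulation bookkeeping, but lifting the simulation into Butler's unbounded-nondeterminism model forces me to produce the abstract predicate sequence $\langle P_j \rangle$ explicitly and to discharge the possibility-style obligation $P_j \implies \neg[a_j](\neg P_{j+1})$ rather than a guaranteed-establishment one. The delicate point is collapsing the finite block of new events between two consecutive refined events into a single abstract step: this is exactly where the $skip$-refinement reading of $INV\_REF$ for $N_1$-events is needed, so that $v'$ does not move while $M_1$ performs new events yet still arrives at a state satisfying $Q_{k_{j+1}}$. Getting the index bookkeeping between the concrete positions $k_j$ and the abstract positions $j$ right, and confirming that the disjoint-alphabet interleaving semantics genuinely admits infinite traces that are finite on one side, are the remaining places where care is required.
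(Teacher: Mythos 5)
Your proof takes essentially the same route as the paper's: a downward-simulation induction along the concrete behaviour, splitting on whether each event is a refinement (matched via GRD\_REF and INV\_REF) or a new event (matched by $skip$ and absorbed by $RUN_{N_1}$), with the divergences inclusion trivial because $divergences(M_1) = \emptyset$. The only real difference is that where the paper dismisses the infinite-traces case with ``in the same way,'' you explicitly construct the abstract witnessing sequence $P_j(v) \mathrel{\widehat=} I(v) \land \exists w.\,(J(v,w) \land Q_{k_j}(w))$ and thread the blocks of $N_1$-events through as $skip$ steps --- a correct and more careful elaboration of the step the paper leaves implicit.
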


\noindent {\bf Proof:}
  We assume state variables of $M_0$ and $M_1$ named as given above, i.e.\ state variables of $M_0$ are $v$ and of $M_1$ are $w$. Let $tr = \langle a_1, \ldots, a_n \rangle \in \traces(M_1)$. We need to show that $tr \in \traces(f_1^{-1}(M_0)\interleave RUN_{N_1})$. First of all note that the interleaving operator merges the traces of two processes together, i.e., the traces of $f_1^{-1}(M_0)\interleave RUN_{N_1}$ are simply those of $f_1^{-1}(M_0)$ with new events arbitrarily inserted. The proof proceeds by induction on the length of the trace.
\begin{description}
  \item[Induction base] Assume $n=0$, i.e., $tr = \langle \rangle$. By definition this means that the initialisation event $init1$ has been executed bringing the machine $M_1$ into a state $w_1$. By INV\_REF (using init as event), we find a state $v_1$ such that $J(v_1,w_1)$ and furthermore $\langle \rangle \in \traces(M_0)$ and hence also in $\traces(f_1^{-1}(M_0)\interleave RUN_{N_1})$. 
    \item[Induction step] Assume that for a trace $tr = \langle a_1, \ldots, a_{j-1} \rangle \in \traces(M_1)$ we have already shown that $tr \in traces( f_1^{-1}(M_0) \interleave RUN_{N_1})$ and this has led us to a pair of states $v_{j-1}$, $w_{j-1}$ such that $J(v_{j-1},w_{j-1})$. Now two cases need to be considered:
      \begin{enumerate}
        \item $a_j \notin N_1$: Assume $a_j$ in $M_1$ to be of the form $$\Bwhen\; H(w)\; \Bthen\; w :| BA1(w,w')\; \Bend$$ and $f_1(a_j)$ in $M_0$ of the form $$\Bwhen\; G(v)\; \Bthen\; v :| BA(v,v')\; \Bend$$ Since $a_j$ is executed in $w_{j-1}$ we have $H(w_{j-1})$. By GRD\_REF we thus get $G(v_{j-1})$. Furthermore, for $w_j $ with $BA1(w_{j-1},w_j)$ we find -- by INV\_REF --  a state $v_j$ such that $J(v_j,w_j)$ and $BA(v_{j-1},v_j)$. Hence $tr \cat \langle a_j \rangle \in \traces(f_1^{-1}(M_0)\interleave RUN_{N_1})$. 
        \item $a_j \in N_1$: Similar to the previous case. Here, $a_j$ refines skip and thus $v_j = v_{j-1}$ and the event $a_j$ is coming from $RUN_{N_1}$. 
      \end{enumerate} 
\end{description}
In the same way we can carry out a proof for infinite traces. For divergences it is even simpler as $\divergences(M_1) = \emptyset$. 
 \hfill $\Box$

\medskip 
\noindent This lemma can be generalised to a chain of refinement steps. For this, we assume that we are given a sequence of Event-B machines $M_i$ with their associated processes $P_i$, and every refinement step introduces some set of new events $N_i$. 

\begin{theorem} \label{thm:runchain}
If a sequence of processes $P_i$, mappings $f_i$, and sets  $N_i$ are such that 
\begin{eqnarray}
f_{i+1}^{-1}(P_i) \interleave RUN_{N_{i+1}} & \refinedby_{TDI} & P_{i+1} \label{refrel}
\end{eqnarray}
for each $i$, 
then
\begin{eqnarray*}
f_n^{-1}(\ldots(f_1^{-1}(P_0))\ldots ) \interleave RUN_{f_n^{-1}(\ldots f_2^{-1}(N_1)\ldots) \union \ldots \union f_{n}^{-1}(N_{n-1}) \union N_n} & \sqsubseteq_{TDI} & P_n 
\end{eqnarray*}
\end{theorem}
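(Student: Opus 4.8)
The plan is to prove the statement by induction on the length $n$ of the refinement chain, using the single-step hypothesis (\ref{refrel}) together with standard algebraic laws of the Failures/Divergences/Infinite-Traces model: monotonicity of renaming and of interleaving with respect to $\refinedby_{TDI}$, distribution of inverse renaming over $\interleave$ and over $RUN$, associativity of $\interleave$, and transitivity of $\refinedby_{TDI}$. To keep the notation manageable I would first abbreviate the two accumulating constructs. Write
$$ Q_n \;=\; f_n^{-1}(\ldots(f_1^{-1}(P_0))\ldots), \qquad W_n \;=\; f_n^{-1}(\ldots f_2^{-1}(N_1)\ldots) \union \ldots \union f_n^{-1}(N_{n-1}) \union N_n, $$
so that the goal is $Q_n \interleave RUN_{W_n} \refinedby_{TDI} P_n$. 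The key bookkeeping observations, both immediate from the definitions and from the fact that inverse image distributes over union, are $Q_{n+1} = f_{n+1}^{-1}(Q_n)$ and $W_{n+1} = f_{n+1}^{-1}(W_n) \union N_{n+1}$.

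The base case $n=1$ is exactly instance $i=0$ of (\ref{refrel}), namely $f_1^{-1}(P_0) \interleave RUN_{N_1} \refinedby_{TDI} P_1$, since $Q_1 = f_1^{-1}(P_0)$ and $W_1 = N_1$. For the inductive step I assume $Q_n \interleave RUN_{W_n} \refinedby_{TDI} P_n$ and aim at $Q_{n+1} \interleave RUN_{W_{n+1}} \refinedby_{TDI} P_{n+1}$. First I would apply $f_{n+1}^{-1}$ to both sides of the induction hypothesis; since inverse renaming is monotone for $\refinedby_{TDI}$ this gives $f_{n+1}^{-1}(Q_n \interleave RUN_{W_n}) \refinedby_{TDI} f_{n+1}^{-1}(P_n)$. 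Distributing the inverse renaming over interleaving and using $f_{n+1}^{-1}(RUN_{W_n}) = RUN_{f_{n+1}^{-1}(W_n)}$ rewrites the left-hand side, yielding
$$ Q_{n+1} \interleave RUN_{f_{n+1}^{-1}(W_n)} \;\refinedby_{TDI}\; f_{n+1}^{-1}(P_n). $$

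Next I would bring in (\ref{refrel}) at $i=n$, i.e. $f_{n+1}^{-1}(P_n) \interleave RUN_{N_{n+1}} \refinedby_{TDI} P_{n+1}$. Interleaving both sides of the previous refinement with $RUN_{N_{n+1}}$ (monotonicity of $\interleave$), then using associativity of $\interleave$ and the law $RUN_A \interleave RUN_B = RUN_{A \union B}$ to merge the two $RUN$ processes, produces
$$ Q_{n+1} \interleave RUN_{f_{n+1}^{-1}(W_n) \union N_{n+1}} \;\refinedby_{TDI}\; f_{n+1}^{-1}(P_n) \interleave RUN_{N_{n+1}}. $$
Since $f_{n+1}^{-1}(W_n) \union N_{n+1} = W_{n+1}$, the left-hand side is precisely $Q_{n+1} \interleave RUN_{W_{n+1}}$; chaining this refinement with the single-step hypothesis via transitivity of $\refinedby_{TDI}$ delivers $Q_{n+1} \interleave RUN_{W_{n+1}} \refinedby_{TDI} P_{n+1}$, completing the induction.

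The routine part is the chain of algebraic rewrites; the part that needs care, and which I expect to be the main obstacle, is justifying the CSP laws I rely on \emph{in the infinite-traces model} and getting the index bookkeeping exactly right. In particular I would want to confirm that inverse (relational) renaming is monotone for $\refinedby_{TDI}$ and distributes over $\interleave$ on traces, divergences and infinite traces simultaneously; that $f_{n+1}^{-1}(RUN_{W_n}) = RUN_{f_{n+1}^{-1}(W_n)}$, which uses $W_n \subseteq \alpha M_n = ran(f_{n+1})$ so that every event of $W_n$ has a preimage; and that $RUN_A \interleave RUN_B = RUN_{A \union B}$ holds in this model. Here the partition of $\alpha M_{n+1}$ into $f_{n+1}^{-1}(\alpha M_n)$ and $N_{n+1}$ ensures $f_{n+1}^{-1}(W_n)$ and $N_{n+1}$ are disjoint, though disjointness is not in fact required for the $RUN$ merge.
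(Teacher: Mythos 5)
Your proof is correct and follows essentially the same route as the paper: the paper demonstrates the two-step composition using exactly the laws you invoke (monotonicity and transitivity of $\refinedby_{TDI}$, distribution of $f^{-1}$ over $\interleave$, and $RUN_A \interleave RUN_B = RUN_{A \union B}$) and then states that the whole chain is collected together, which is precisely the induction you spell out explicitly. Your additional care about justifying the laws in the infinite-traces model and about $W_n$ lying in the range of $f_{n+1}$ is sensible but does not change the argument.
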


\noindent {\bf Proof:}
Two successive refinement steps combine to provide a relationship between $P_0$ and $P_2$ of the same form as Line~\ref{refrel} above, as follows:  
\[
\begin{array}{rcll}
f_2^{-1}(P_1) \interleave RUN_{N_2} & \sqsubseteq_{TDI} & P_2 & \mbox{(given)} \\
f_2^{-1}(f_1^{-1}(P_0) \interleave RUN_{N_1}) \interleave RUN_{N_2} & \sqsubseteq_{TDI} & P_2 & \mbox{(line (\ref{refrel}), transitivity of $\sqsubseteq$)} \\
f_2^{-1}(f_1^{-1}(P_0)) \interleave RUN_{f_2^{-1}(N_1)} \interleave RUN_{N_2} & \sqsubseteq_{TDI} & P_2 
& (\mbox{Law:\ } f^{-1}(P \interleave Q) = f^{-1}(P) \interleave f^{-1}(Q))
\\
f_2^{-1}(f_1^{-1}(P_0)) \interleave RUN_{f_2^{-1}(N_1) \union N_2} & \sqsubseteq_{TDI} & P_2 
& (\mbox{Law:\ } RUN_A \interleave RUN_B = RUN_{A \union B})
\end{array}
\]
Hence the whole chain of refinement steps can be collected together, yielding the result.
\hfill $\Box$

\subsection{Convergent and anticipated events} 

The previous result lets us relate the first and last Event-B machine in a chain of refinements. Due to the lazy abstraction operator (and the resulting possibility of defining refinement without hiding new events), we considered divergence free processes there: all processes $P_i$ representing Event-B machines, are divergence free by definition. However, Event-B refinement is concerned with a particular form of divergence and its avoidance. A sort of divergence would arise when new events (or more specifically, convergent events) could be executed forever, and this is what the proof rules for variants rule out. 

We would like to capture the impact of convergence and anticipated sets of events in the CSP semantics as well.  To do so, we first of all define the specification predicate 
\begin{eqnarray*}
CA(C,R)(u) & \defs & (\# (u \project C) = \infty \implies \# (u \project R) = \infty)
\end{eqnarray*}

\noindent Intuitively, this states that all infinite traces having infinitely many convergent ($C$) events also have infinitely many ($R$) remaining events (and thus cannot execute convergent events alone forever). In this case we say that the Event-B machine {\em does not diverge on $C$ events}. 

\begin{definition} \label{def:satca}
Let $M$ be an Event-B machine with its alphabet $\alpha M$ containing event sets $C$ and $R$ with $C \inter R = \emptyset$. $M$ {\em does not diverge on $C$ events} if $M \sat CA(C,R)$.
\end{definition}

\noindent Convergent events in Event-B machines only come into play during refinement. Thus a plain, single Event-B machine has no convergent events ($C = \emptyset$) and thus trivially satisfies the specification predicate. 

\begin{lemma} \label{lem5.4}
If $M_0 \brefinedby M_1$, and $M_1$ has convergent, anticipated, and remaining events $C_1$, $A_1$, and $R_1$ respectively, then $M_1 \sat CA(C_1,R_1)$
\end{lemma}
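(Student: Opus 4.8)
The plan is to argue by contradiction, using the variant $V$ and the well-foundedness of $\nat$. Unfolding the goal, $M_1 \sat CA(C_1,R_1)$ asserts that every $u \in infinites(M_1)$ satisfies $\#(u \project C_1) = \infty \implies \#(u \project R_1) = \infty$. So I would fix an arbitrary infinite trace $u = \langle u_0, u_1, \ldots \rangle$ of $M_1$, assume $\#(u \project C_1) = \infty$, and suppose for contradiction that $\#(u \project R_1) < \infty$.

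First I would turn the infinite trace into a concrete run. From the predicate-transformer definition of $infinites(M_1)$ there is a sequence $\langle P_i \rangle$ with $\neg[init1](\neg P_0)$ and $P_i \implies \neg[u_i](\neg P_{i+1})$ for all $i$; unwinding the weakest-precondition definitions of the events then lets me extract an infinite sequence of concrete states $w_0, w_1, \ldots$ such that $init1$ reaches $w_0$ and, for each $i$, the guard $H_i(w_i)$ of $u_i$ holds and the before-after predicate $BA1_i(w_i, w_{i+1})$ holds. Alongside this run I would replay the simulation argument of Lemma~\ref{lem:btref0} to build a matching sequence of abstract states $v_0, v_1, \ldots$ with $I(v_i) \land J(v_i, w_i)$ for every $i$ (new events matched by $skip$, so $v_{i+1} = v_i$; refined events matched via INV\_REF, with $I$ maintained by the consistency of the abstract machine). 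The point of this step is to make the full antecedent $I(v) \land J(v,w) \land H(w) \land BA1(w,w')$ of WFD\_REF available at each transition $w_i \to w_{i+1}$.

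With the run in place, the contradiction is a standard infinite-descent argument. Since $\#(u \project R_1) < \infty$, there is some $K$ with $u_k \in C_1 \union A_1$ for all $k \ge K$. Applying WFD\_REF at each step $k \ge K$ yields $V(w_k) \in \nat$ together with $V(w_{k+1}) < V(w_k)$ when $u_k \in C_1$ and $V(w_{k+1}) \le V(w_k)$ when $u_k \in A_1$. Hence $\langle V(w_k) \rangle_{k \ge K}$ is a non-increasing sequence of natural numbers. But $\#(u \project C_1) = \infty$ forces infinitely many $k \ge K$ with $u_k \in C_1$, so the sequence strictly decreases infinitely often; after at most $V(w_K)+1$ strict decreases it would drop below $0$, contradicting $V(w_k) \in \nat$. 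Therefore $\#(u \project R_1) = \infty$, which establishes $CA(C_1,R_1)(u)$ for the arbitrary $u$, and so $M_1 \sat CA(C_1,R_1)$.

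I expect the main obstacle to be the bridging steps rather than the descent itself: extracting a genuine state sequence from the predicate-transformer presentation of infinite traces, and in particular securing the companion abstract states $v_i$ so that the $I(v) \land J(v,w)$ premise of WFD\_REF is actually satisfied at every step. Once those premises are legitimately in hand the variant does all the work, and well-foundedness of $\nat$ closes the case immediately.
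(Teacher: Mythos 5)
Your proposal is correct and follows essentially the same route as the paper's proof: a contradiction argument that isolates a tail of the infinite trace consisting only of $C_1 \union A_1$ events, applies WFD\_REF to obtain a non-increasing sequence of naturals that strictly decreases infinitely often, and concludes by well-foundedness of $\nat$. Your treatment is in fact somewhat more explicit than the paper's about extracting the concrete state sequence from the infinite-traces semantics and obtaining the matching abstract states via the simulation argument of Lemma~\ref{lem:btref0}, a step the paper compresses into a single appeal to GRD\_REF and INV\_REF.
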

\smallskip 
\noindent {\bf Proof:} 
We prove this by contradiction.  Assume $\neg M_1 \sat CA(C_1,R_1)$.  Then there is some $u \in infinites(M_1)$ such that  $\#(u \project C_1) = \infty$ and $\#(u \project R_1) < \infty$.  Then there must be some $tr_0$, $u'$ such that $u = tr_0 \cat u'$ with $u' \in (C_1 \cup A_1)^\omega$ (i.e. $tr_0$ is a prefix of $u$ containing all the $R_1$ events).  Moreover, $\# u' \project C_1 = \infty$. 
 
Now since $M_0 \brefinedby M_1$ we have by GRD\_REF and INV\_REF that there is some pair of states $(v,w)$ (abstract and concrete state) reached after executing $tr_0$ for which $J(v,w)$ and $I(v)$ is true. Furthermore, $V(w)$ is a natural number. Also by $M_0 \brefinedby M_1$ we have an infinite sequence of pairs of states $(v_i,w_i)$ (for the remaining infinite trace $u'$) such that $J(v_i,w_i)$.   Since each event in $u'$ is in $A_1$ or $C_1$ we have from WFD\_REF that $V(w_{i+1}) \le V(w_i)$ for each $i$.  Further, for infinitely many $i$'s (i.e. those events in $C_1$) we have $V(w_{i+1}) < V(w_i)$.  Thus we have a sequence of values $V(w_i)$ decreasing infinitely often without ever increasing.  This contradicts the fact that the $V(w_i) \in \nat$. 
\hfill $\Box$

\smallskip
\noindent 
A number of further interesting properties can be deduced for the specification predicate $CA$. 

\begin{lemma} \label{lem:ca} Let $P$ be a CSP process and $C, C',R \subseteq \alpha P$ nonempty finite sets of events. 
\begin{enumerate}
\item If $P \sat CA(C,R)$ then $f^{-1}(P) \sat CA(f^{-1}(C),f^{-1}(R))$.
\item If $P \sat CA(C,R)$ and $N \inter C = \{ \}$ then $P \interleave RUN_N \sat CA(C,R)$.
\item If $P \sat CA(C,R)$ and $P \sat CA(C',C \union R)$ then $P \sat CA(C \union C',R)$.
\item If $P \sat CA(C,R)$ and $C \inter R = \{ \}$ then $P \hide C$ is divergence-free.
\end{enumerate}
\end{lemma}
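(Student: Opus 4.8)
The plan is to treat the four parts separately, in each case unfolding $P \sat CA(C,R)$ into its meaning, namely $\forall u \in infinites(P) \,.\, (\#(u \project C) = \infty \implies \#(u \project R) = \infty)$, and then reasoning about how projections of infinite traces behave under renaming, interleaving, and hiding. Parts 1--3 should fall out directly from the definition of $CA$ together with the relevant operator semantics; part 4 is the substantive one, and is where I expect the real work, since it must bridge the infinite-trace specification $CA$ and divergence-freedom by appeal to Lemma~\ref{lem:infdivfree}.

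For part 1 I would fix $u \in infinites(f^{-1}(P))$ and use the renaming semantics, under which the pointwise image $f(u)$ lies in $infinites(P)$. The key observation is that an event $b$ of $u$ lies in $f^{-1}(C)$ exactly when $f(b) \in C$, so $\#(u \project f^{-1}(C)) = \#(f(u) \project C)$, and likewise for $R$. Assuming $\#(u \project f^{-1}(C)) = \infty$ then gives $\#(f(u) \project C) = \infty$; applying $P \sat CA(C,R)$ to $f(u)$ yields $\#(f(u) \project R) = \infty$, hence $\#(u \project f^{-1}(R)) = \infty$, as required. For part 2 I would take $u \in infinites(P \interleave RUN_N)$ and decompose it, via the interleaving semantics, into a contribution $u_P$ coming from $P$ and a contribution consisting only of $N$ events. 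Because $N \inter C = \emptyset$, every $C$ event of $u$ comes from $u_P$; so if $\#(u \project C) = \infty$ then $u_P$ must itself be infinite (a finite $u_P$ could supply only finitely many $C$ events), whence $u_P \in infinites(P)$ and $\#(u_P \project C) = \infty$. Then $P \sat CA(C,R)$ gives $\#(u_P \project R) = \infty$, and since every event of $u_P$ survives in the merged trace $u$ we conclude $\#(u \project R) \geq \#(u_P \project R) = \infty$.

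Part 3 is a case analysis on a single $u \in infinites(P)$ with $\#(u \project (C \union C')) = \infty$. Since the $C \union C'$ events are exactly the $C$ events together with the $C'$ events, at least one of $\#(u \project C)$ and $\#(u \project C')$ is infinite. If $\#(u \project C) = \infty$, then $P \sat CA(C,R)$ gives the conclusion directly. If instead $\#(u \project C') = \infty$, then $P \sat CA(C',C \union R)$ gives $\#(u \project (C \union R)) = \infty$, so again at least one of $\#(u \project C)$, $\#(u \project R)$ is infinite: either $\#(u \project R) = \infty$ and we are done immediately, or $\#(u \project C) = \infty$ and we reduce to the first case via $P \sat CA(C,R)$. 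In every branch $\#(u \project R) = \infty$, so $P \sat CA(C \union C', R)$.

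For part 4 I would reduce divergence-freedom of $P \hide C$ to Lemma~\ref{lem:infdivfree} instantiated with the hidden set taken to be $C$, whose hypotheses are that $P$ is divergence-free and that $\#(u \hide C) = \infty$ for every $u \in infinites(P)$. Divergence-freedom of $P$ holds because the processes in play are Event-B machine processes, which are divergence-free by the Divergences clause of Section~3. For the second hypothesis I would split on whether $u$ has finitely or infinitely many $C$ events: if finitely many, then since $u$ is infinite it has infinitely many non-$C$ events and $\#(u \hide C) = \infty$; if infinitely many, then $P \sat CA(C,R)$ gives $\#(u \project R) = \infty$, and since $C \inter R = \emptyset$ those $R$ events are all retained when $C$ is hidden, so again $\#(u \hide C) \geq \#(u \project R) = \infty$. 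The main obstacle I anticipate is precisely this part: one must invoke Lemma~\ref{lem:infdivfree} with exactly the right instantiation, use the disjointness $C \inter R = \emptyset$ to convert ``infinitely many $R$ events'' into ``infinitely many retained events'', and be explicit that the divergence-freedom hypothesis of that lemma is discharged, which relies on $P$ being an Event-B machine process rather than an arbitrary CSP process.
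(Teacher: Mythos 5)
Your arguments for parts 1--3 and for the trace-counting half of part 4 follow essentially the same route as the paper: unfold $CA$, push projections through renaming and interleaving, do the two-way case split for part 3, and for part 4 show that every $u \in infinites(P)$ satisfies $\#(u \hide C) = \infty$ by splitting on whether $\#(u \project C)$ is finite (then infinitely many non-$C$ events survive) or infinite (then $CA(C,R)$ and $C \inter R = \emptyset$ give infinitely many surviving $R$ events). The paper phrases part 4 as a direct contradiction from the semantics of hiding rather than as an instantiation of Lemma~\ref{lem:infdivfree}, but that is a cosmetic difference.

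There is, however, one genuine gap, and you half-noticed it yourself: you discharge the divergence-freedom hypothesis of Lemma~\ref{lem:infdivfree} by asserting that ``the processes in play are Event-B machine processes.'' That is not a hypothesis of the lemma --- it is stated for an arbitrary CSP process $P$, and it is later applied to processes built with renaming and interleaving, not only to raw machines. The paper closes this hole differently: it observes that $P \sat CA(C,R)$ \emph{already implies} that $P$ is divergence-free. This follows from the FDI model's closure conditions: if $tr \in divergences(P)$ then every infinite extension of $tr$ lies in $infinites(P)$, in particular $tr \cat \langle c, c, \ldots \rangle$ for some $c \in C$ (using that $C$ is nonempty); this trace has infinitely many $C$ events but, since $C \inter R = \emptyset$, only the finitely many $R$ events of $tr$, contradicting $CA(C,R)$. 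Replacing your appeal to Event-B with this observation repairs the proof and keeps the lemma at its stated level of generality.
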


\smallskip 
\noindent {\bf Proof:} 
\begin{enumerate}
  \item Assume that $u \in infinites(f^{-1}(P))$ and $\# (u \project f^{-1}(C)) = \infty$. From the first we get $f(u) \in infinites(P)$. From the latter it follows that $\#(f(u) \project C) = \infty$. With $P \sat CA(C,R)$ we have $\#(f(u) \project R) = \infty$ and hence $\#(u \project f^{-1}(R)) = \infty$. 
\item Let $u\in infinites(P \interleave RUN_N)$ and $\# (u \project C) = \infty$. With $N \cap C = \emptyset$ we get $\#((u \hide N) \project C) = \infty$. By definition of $\interleave$ we have $u \hide N \in infinites(P)$ ($u \hide N$ is infinite since $\#((u\hide N) \project C) = \infty$). By $P \sat CA(C,R)$ we get $\#((u \hide N) \project R) = \infty$, hence $\# (u \project R) = \infty$. 
\item Let $u \in infinites(P)$ such that $\#(u\project (C \cup C')) = \infty$. Both $C$ and $C'$ are finite sets hence either $\# (u \project C) = infty$ or $\#(u \project C') = \infty$ (or both). In the first case we get $\#(u \project R) = \infty$ by $P \sat CA(C,R)$. In the second case it follows that $\#(u \project (C \cup R)) = \infty$ and hence again $\#(u \project C) = \infty$ or directly  $\#(u \project R) = \infty$. 
\item First of all note that if $P \sat CA(C,R)$ then $P$ is divergence free. Now assume that there is a trace $tr \in divergences ( P \hide C)$. Then there exists a trace $u \in infinites(P)$ such that $tr = u \hide C$, and so $\#(u \hide C) < \infty$.   Hence $\# (u \project C) = \infty$. However, --- as $C \cap R = \emptyset$ --- $\#(u \project R) \neq \infty$ which contradicts $P \sat CA(C,R)$.  
%
\end{enumerate} 
\hfill $\Box$

\smallskip 
\noindent The most interesting of these properties is probably the last one: it relates the specification predicate to the definition of divergence freedom in CSP. In CSP, a process does not diverge on a set of events $C$ if $P \hide C$ is divergence-free. 

This gives us some results about the specification predicate for single Event-B machines and CSP processes. Next, we would like to apply this to refinements. First, we again consider just two machines. 

\begin{lemma} \label{lem:brefca}
Let $M_0 \preccurlyeq M_1$ with an associated refinement function $f_1$ and let $M_0 \sat CA(C_0,R_0)$. Then
$M_1 \sat CA(f_1^{-1}(C_0) \union C_1\;,\;f_1^{-1}(R_0))$. 
\end{lemma}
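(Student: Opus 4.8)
The plan is to establish two separate $CA$ specifications for $M_1$ and then fuse them with part~(3) of Lemma~\ref{lem:ca}. The target $CA(f_1^{-1}(C_0) \union C_1, f_1^{-1}(R_0))$ splits its convergent set into two natural pieces: $f_1^{-1}(C_0)$, the refinements of $M_0$'s convergent events, whose good behaviour ought to be inherited from $M_0$; and $C_1$, the freshly-convergent events of the step, whose good behaviour is exactly what Lemma~\ref{lem5.4} provides. So I would aim to show
\[
M_1 \sat CA(f_1^{-1}(C_0), f_1^{-1}(R_0)) \qquad\text{and}\qquad M_1 \sat CA(C_1,\, f_1^{-1}(C_0)\union f_1^{-1}(R_0)),
\]
and then apply Lemma~\ref{lem:ca}(3) with $C := f_1^{-1}(C_0)$, $R := f_1^{-1}(R_0)$ and $C' := C_1$ (noting $C \union R = f_1^{-1}(C_0)\union f_1^{-1}(R_0)$) to deduce $M_1 \sat CA(f_1^{-1}(C_0)\union C_1, f_1^{-1}(R_0))$, which is the conclusion.

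For the first specification I would push $M_0$'s hypothesis through the refinement function. Starting from $M_0 \sat CA(C_0,R_0)$, Lemma~\ref{lem:ca}(1) gives $f_1^{-1}(M_0) \sat CA(f_1^{-1}(C_0), f_1^{-1}(R_0))$. Since the new events $N_1$ are disjoint from $f_1^{-1}(\alpha M_0)$ (the two sets partition $\alpha M_1$), in particular $N_1 \inter f_1^{-1}(C_0) = \emptyset$, so Lemma~\ref{lem:ca}(2) lets me add the interleaved $RUN_{N_1}$ and obtain $f_1^{-1}(M_0) \interleave RUN_{N_1} \sat CA(f_1^{-1}(C_0), f_1^{-1}(R_0))$. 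The final move is to transfer this $\sat$ property across to $M_1$: by Lemma~\ref{lem:btref0}, $f_1^{-1}(M_0) \interleave RUN_{N_1} \refinedby_{TDI} M_1$, so $infinites(M_1) \subseteq infinites(f_1^{-1}(M_0) \interleave RUN_{N_1})$; since $CA$ is a specification universally quantified over infinite traces, every infinite trace of $M_1$ already satisfies it, giving $M_1 \sat CA(f_1^{-1}(C_0), f_1^{-1}(R_0))$.

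For the second specification I would invoke Lemma~\ref{lem5.4}, which (because $M_0 \brefinedby M_1$) yields $M_1 \sat CA(C_1,R_1)$. It then remains only to rewrite $R_1$: the fourth refinement condition states $R_1 = f_1^{-1}(C_0) \union f_1^{-1}(R_0)$, so this is precisely $M_1 \sat CA(C_1,\, f_1^{-1}(C_0)\union f_1^{-1}(R_0))$, the second ingredient needed above.

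I expect the main obstacle to be the transfer step in the second paragraph --- recognising that a $\sat$ specification, being a universal property of infinite traces, is inherited downwards along $\refinedby_{TDI}$ simply because the relation forces infinite-trace containment; this is the one place where the refinement relation (rather than the pure set-algebra of the event classes) does real work. A secondary point to keep honest is the bookkeeping of event classes, especially the use of the fourth condition $R_1 = f_1^{-1}(C_0)\union f_1^{-1}(R_0)$ and the disjointness $N_1 \inter f_1^{-1}(C_0)=\emptyset$; and since Lemma~\ref{lem:ca} is stated for nonempty sets, any degenerate case where $C_0$, $C_1$ or $R_0$ is empty would be checked separately, where the relevant $CA$ predicate holds either vacuously or trivially.
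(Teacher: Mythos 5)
Your proof is correct, but it is organised differently from the paper's own argument. The paper proves Lemma~\ref{lem:brefca} directly at the level of infinite traces: it takes $u \in infinites(M_1)$ with $\#(u \project (f_1^{-1}(C_0) \union C_1)) = \infty$, splits on whether $\#(u \project f_1^{-1}(C_0)) = \infty$ or $\#(u \project C_1) = \infty$, handles the first case via Lemma~\ref{lem:btref0} and renaming-preserves-length reasoning, and the second via Lemma~\ref{lem5.4} together with $R_1 = f_1^{-1}(C_0 \union R_0)$, folding the residual subcase back into the first. You instead package exactly these ingredients through the algebra of $CA$: Lemma~\ref{lem:ca}(1) and (2) plus the downward inheritance of $\sat$ along $\refinedby_{TDI}$ (from Lemma~\ref{lem:btref0}) give $M_1 \sat CA(f_1^{-1}(C_0), f_1^{-1}(R_0))$; Lemma~\ref{lem5.4} and the same identity for $R_1$ give $M_1 \sat CA(C_1, f_1^{-1}(C_0) \union f_1^{-1}(R_0))$; and Lemma~\ref{lem:ca}(3) fuses them. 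The case split the paper performs by hand is precisely what is buried inside the proof of Lemma~\ref{lem:ca}(3), so the two arguments have the same mathematical content, but yours is more modular --- indeed it is essentially the calculation the paper itself carries out immediately after the lemma for the two-step chain $M_0 \brefinedby M_1 \brefinedby M_2$, specialised to a single step. What the paper's direct proof buys is independence from the nonemptiness/finiteness side-conditions under which Lemma~\ref{lem:ca} is stated; you correctly flag that a degenerate case (e.g.\ $C_0 = \emptyset$ or $C_1 = \emptyset$) would need a separate, trivial check, and with that caveat your argument is complete.
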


\noindent {\bf Proof:}
Assume $u \in infinites(M_1)$ and $\# (u\project (f_1^{-1}(C_0) \cup C_1) = \infty$.  
We aim to establish that $\# (u \project f_1^{-1}(R_0)) = \infty$.  We have $\#(u \project f_1^{-1}(C_0)) = \infty$ or $\# (u \project C_1) = \infty$. 

In the former case,  Lemma \ref{lem:btref0} yields that $f_1(u \project f^{-1}(\alpha M_0)) \in infinites(M_0)$.  Then
\[ \begin{array}{rl} 
\# (u \project f_1^{-1}(C_0)) = \infty & \mbox{ (given) }  \\
\# (f_1(u \project f^{-1}(C_0)) \project C_0) = \infty & \mbox{ (since renaming preserves length) } \\
\# (f_1(u \project f^{-1}(\alpha M_0)) \project C_0) = \infty & \mbox{ (since } C_0 \subseteq \alpha M_0) \\
\# (f_1(u \project f^{-1}(\alpha M_0)) \project R_0) = \infty & \mbox{ (by } M_0 \sat CA(C_0,R_0)) \\
\# (u \project f^{-1}(\alpha M_0)) \project f^{-1}(R_0) = \infty & \mbox{ (since renaming preserves length) } \\
\# (u \project f_1^{-1}(R_0)) = \infty & \mbox{ (since } R_0 \subseteq \alpha M_0) 
\end{array} \] 

%

In the latter case Lemma~\ref{lem5.4} yields that $\#(u \project R_1) = \infty$.
Then 
\[ \begin{array}{rl} 
\#(u \project R_1) = \infty & \\
\#(u \project f_1^{-1}(R_0 \cup C_0)) = \infty & \mbox{ (since } R_1 = f_1^{-1}(C_0 \cup R_0)) \\
\#(u \project f_1^{-1}(R_0)) = \infty \vee \#(u \project f_1^{-1}(C_0)) = \infty & 
\end{array} \] 
The first disjunct is the desired result, the second is the one already treated above. 

\hfill $\Box$ 


\smallskip 
\noindent Note that by Lemma \ref{lem:ca} (4) the above result implies that the machine $M_1$ does not diverge on $f_1^{-1}(C_0) \cup C_1$, in particular $M_0 \hide (f_1^{-1}(C_0) \cup C_1)$ is divergence-free.

Similar to the previous case, we can lift this to chains of refinement steps. Consider the last result with respect to two refinement steps $M_0 \brefinedby M_1 \brefinedby M_2$:  
\[ 
\begin{array}{rcll}
M_0 & \sat & CA(C_0,R_0) & \mbox{(given)} \\
f^{-1}(M_0) & \sat & CA(f^{-1}(C_0),f^{-1}(R_0)) & \mbox{(lemma~\ref{lem:ca} (1))} \\
f^{-1}(M_0) \interleave RUN_{N_1} & \sat & CA(f^{-1}(C_0),f^{-1}(R_0)) & \mbox{(lemma~\ref{lem:ca} (2),} \\
&&& \mbox{\quad since $f_1^{-1}(C_0) \inter N_1 = \emptyset$)} \\
M_1 & \sat & CA(f^{-1}(C_0),f^{-1}(R_0)) & \mbox{(lemma~\ref{lem:btref0})} \\
f_2^{-1}(M_1) & \sat & CA(f_2^{-1}(f^{-1}(C_0)),f_2^{-1}(f^{-1}(R_0))) & \mbox{(lemma~\ref{lem:ca} (1))} \\
f_2^{-1}(M_1) \interleave RUN_{N_2} & \sat & CA(f_2^{-1}(f^{-1}(C_0)),f_2^{-1}(f^{-1}(R_0))) & \mbox{(lemma~\ref{lem:ca} (2))} \\
M_2 & \sat & CA(f_2^{-1}(f^{-1}(C_0)),f_2^{-1}(f^{-1}(R_0))) & \mbox{(lemma~\ref{lem:btref0})} \\
M_2 & \sat & CA(C_2 \union f_2^{-1}(C_1) \;,\; f_2^{-1}(R_1)) & \mbox{(lemma~\ref{lem:brefca})}
\end{array}
\]
Then by applying Lemma~\ref{lem:ca}(3) to the final two lines, with $R = f_2^{-1}(f_1^{-1}(R_0))$, $C = f_2^{-1}(f_1^{-1}(C_0))$, and $C' = C_2 \union f_2^{-1}(C_1)$, we obtain
\[
\begin{array}{rcll}
M_2 & \sat & CA(C_2 \union f_2^{-1}(C_1) \union f_2^{-1}(f_1^{-1}(C_0)) \;,\; f_2^{-1}(f_1^{-1}(R_0))
\end{array}
\]
Thus if
\[ M_0 \preccurlyeq M_1 \preccurlyeq \ldots \preccurlyeq M_n \]
then collecting together all the steps yields that
\begin{eqnarray}
M_n & \sat & 
CA((f_n^{-1}(\ldots f_1^{-1}(C_0)\ldots ) \union \ldots f_n^{-1}(C_{n-1}) \union C_n) \,,\, 
f_n^{-1}(\ldots f_1^{-1}(R_0) \ldots)) \label{collect2}
\end{eqnarray}

\medskip
\noindent Finally, we would like to put together these results into one result relating the initial machine $M_0$ to the final machine $M_n$ in the refinement chain. This result should use hiding for the treatment of new events, and -- by stating the relationship between $M_0$ and $M_n \hide \{ new\ events\}$ via infinite-traces-divergences refinement -- show that Event-B refinement actually does not introduce divergences on new events. For such chains of refinement steps we always assume that $A_0 = C_0 = \emptyset$ (initially we have neither anticipated nor convergent events), and $A_n = \emptyset$ (at the end all anticipated events have become convergent). 

For this, we first of all need to find out what the ``new events'' are in the final machine. Define $g_{i,j}$ as the functional composition of the event mappings from $f_j$ to $f_i$:
\begin{eqnarray*}
g_{i,j} & = & f_i \comp f_{i+1} \comp \ldots \comp f_j
\end{eqnarray*}

\noindent Then noting the disjointness of the union, by repeated application of
\begin{eqnarray*}
C_j \uplus A_j \uplus R_j & = & 
f_j^{-1}(C_{j-1} \uplus A_{j-1} \uplus R_{j-1}) \uplus N_j 
\end{eqnarray*}
we obtain
\begin{eqnarray}
C_j \uplus A_j \uplus R_j & = & g_{1,j}^{-1}(C_0 \uplus A_0 \uplus R_0) \uplus g_{2,j}^{-1}(N_1) \uplus \ldots \uplus g_{j,j}^{-1}(N_{j-1}) \uplus N_j \label{eq:new}
\end{eqnarray}
Observe that this is a partition of $C_j \uplus A_j \uplus R_j$.
Also, by repeated application of
\begin{eqnarray*}
R_j & = & f_j^{-1}(R_{j-1})  \uplus f_j^{-1}(C_{j-1}) 
\end{eqnarray*}
we obtain
\begin{eqnarray}
R_j \uplus C_j & = & g_{1,j}^{-1}(R_0) \uplus g_{1,j}^{-1}(C_0) \uplus g_{2,j}^{-1}(C_1) \uplus \ldots \uplus g_{j,j}^{-1}(C_{j-1}) \uplus C_j \label{eq:con}
\end{eqnarray}
Observe that this is a partition of $C_j \uplus R_j$.

In a full refinement chain $M_0 \preccurlyeq \ldots \preccurlyeq M_n$ we have that $A_0 = \{ \}$,$C_0 = \{ \}$, and $A_n = \{ \}$.  Define:
\begin{eqnarray*}
NEW & = & g_{2,n}^{-1}(N_1) \uplus \ldots \uplus g_{n,n}^{-1}(N_{j-1}) \uplus N_n \\[1ex]
CON & = & g_{1,n}^{-1}(C_0) \uplus \ldots \uplus g_{n,n}^{-1}(C_{j-1}) \uplus C_n
\end{eqnarray*}
These constructions are illustrated in Figures~\ref{fig:new} and \ref{fig:con}.

\begin{figure}[t]
\begin{center}
{\tiny
\def\svgwidth{\textwidth}
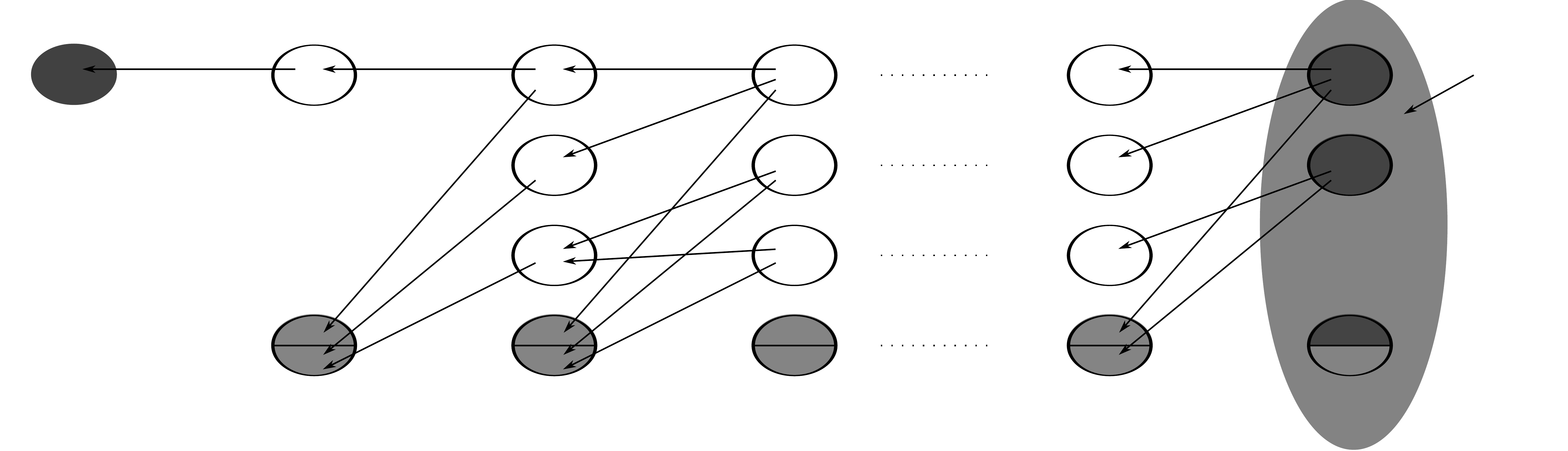
}
\end{center}
\caption{Constructing $NEW$} \label{fig:new}
\end{figure}

\begin{figure}[t]
\begin{center}
{\tiny
\def\svgwidth{\textwidth}
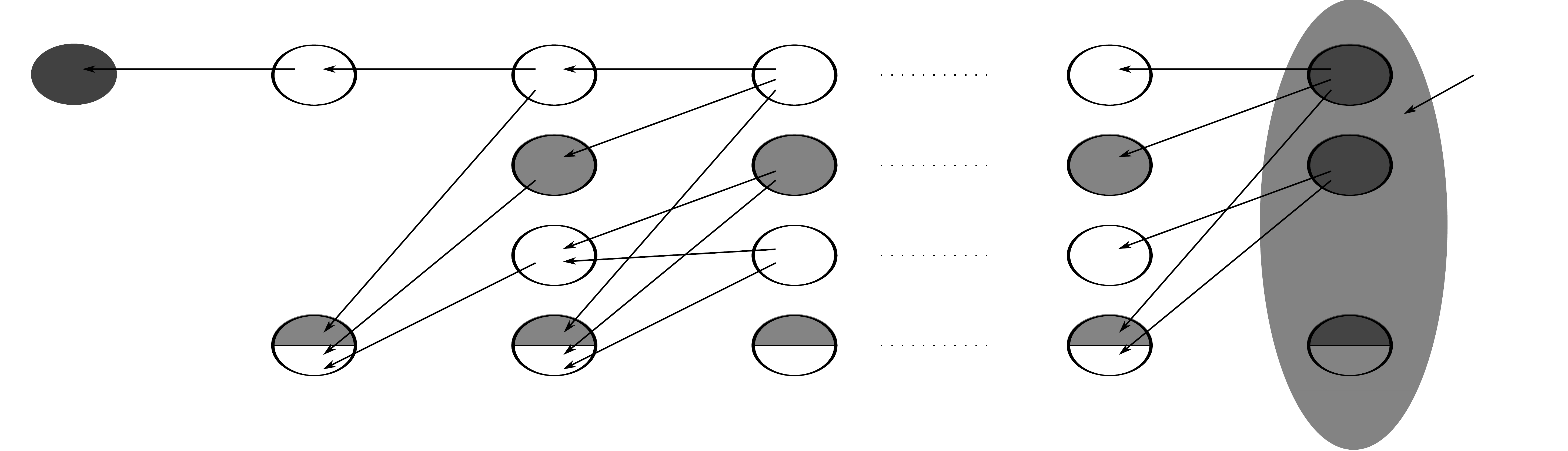
}
\end{center}
\caption{Constructing $CON$} \label{fig:con}
\end{figure}

Then from Equation~\ref{eq:new} above with $j = n$, and using $A_0 = C_0 = A_n = \{ \}$ we obtain
\begin{eqnarray*}
C_n \uplus R_n & = & g_{1,n}^{-1}(R_0) \uplus NEW
\end{eqnarray*}

\noindent From Equation~\ref{eq:con} above with $j=n$ we obtain
\begin{eqnarray*}
C_n \uplus R_n & = & g_{1,n}^{-1}(R_0) \uplus CON
\end{eqnarray*}
Hence $NEW = CON$.
From Theorem \ref{thm:runchain} and Line (\ref{collect2}) above respectively we obtain that
\begin{eqnarray*}
& f_n^{-1}(\ldots(f_1^{-1}(M_0))\ldots ) \interleave RUN_{NEW}  \sqsubseteq_{TDI}  M_n \\[2ex]
\mbox{ and } & M_n \sat CA(CON
\;,\; 
f_n^{-1}(\ldots f_1^{-1}(R_0) \ldots) \; )
\end{eqnarray*}
Lemma~\ref{lem:ca}(4) yields that $M_n \hide CON$ is divergence-free, i.e., $M_n \hide NEW$ is divergence-free. Hence by Lemma~\ref{lem:abs1} we obtain that
\begin{eqnarray} f_n^{-1}(\ldots(f_1^{-1}(M_0))\ldots ) & \sqsubseteq_{TDI} & M_n \hide NEW \label{result}\end{eqnarray}
or, equivalently, that the following theorem holds true. 

\begin{theorem} \label{thm:hiding}
Let $M_0 \preccurlyeq M_1 \preccurlyeq \ldots \preccurlyeq M_n$ be a chain of refinement steps such that $A_0 = C_0 = \emptyset$ and $A_n = \emptyset$, refining events according to functions $f_i$, and let $NEW$ be the set of events as calculated above. Then 
\[ M_0 \sqsubseteq_{TDI} f_1(f_2(\ldots f_n(M_n \hide NEW) \ldots)) \]
\end{theorem}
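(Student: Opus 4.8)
The plan is to obtain this theorem as a direct reformulation of Equation~(\ref{result}), so that almost all the work has already been done by the preceding development. Recall that Equation~(\ref{result}),
\[
f_n^{-1}(\ldots(f_1^{-1}(M_0))\ldots ) \;\sqsubseteq_{TDI}\; M_n \hide NEW,
\]
was assembled from Theorem~\ref{thm:runchain} (which gives $f_n^{-1}(\ldots f_1^{-1}(M_0)\ldots) \interleave RUN_{NEW} \sqsubseteq_{TDI} M_n$), from Line~(\ref{collect2}) together with the identity $NEW = CON$ (which via Lemma~\ref{lem:ca}(4) yields divergence-freedom of $M_n \hide NEW$), and finally from Lemma~\ref{lem:abs1}. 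Hence the genuine content is settled; what remains is purely to rewrite the inverse-renamed left-hand side into $M_0$, and this is exactly what applying the forward composite renaming achieves.

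First I would invoke monotonicity of (forward) renaming with respect to $\sqsubseteq_{TDI}$, which holds in the Traces/Divergences/Infinite-Traces model since renaming introduces no new divergences. Writing $g_{1,n} = f_1 \comp f_2 \comp \ldots \comp f_n$ for the composite event map, applying $g_{1,n}$ to both sides of Equation~(\ref{result}) preserves the refinement and gives
\[
g_{1,n}\bigl(f_n^{-1}(\ldots(f_1^{-1}(M_0))\ldots )\bigr) \;\sqsubseteq_{TDI}\; g_{1,n}(M_n \hide NEW).
\]
The right-hand side is by definition $f_1(f_2(\ldots f_n(M_n \hide NEW)\ldots))$, which is precisely the process appearing in the theorem statement, so the right-hand side already matches.

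It then remains to collapse the left-hand side to $M_0$. The key fact is the standard CSP renaming law $f_i(f_i^{-1}(P)) = P$, valid because each $f_i$ is a surjection onto $\alpha M_{i-1}$ (a non-injective $f_i$ merely collapses the offered choice of preimages back to the original event). Peeling the renamings off from the outside, $f_n(f_n^{-1}(\ldots)) = f_{n-1}^{-1}(\ldots f_1^{-1}(M_0)\ldots)$, and iterating this $n$ times reduces the whole tower to $M_0$, since $g_{1,n}$ is the composite of surjections and hence itself surjects onto $\alpha M_0$. At each stage I would check the alphabets line up: the alphabet of $f_n^{-1}(\ldots f_1^{-1}(M_0)\ldots)$ is $g_{1,n}^{-1}(\alpha M_0) = g_{1,n}^{-1}(R_0)$ (using $A_0 = C_0 = \emptyset$), which is exactly the set of events surviving the hiding of $NEW$, so $g_{1,n}$ acts as a genuine renaming on precisely these events. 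Combining the two sides then gives $M_0 \sqsubseteq_{TDI} f_1(f_2(\ldots f_n(M_n \hide NEW)\ldots))$, as required.

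The main obstacle I anticipate is not conceptual but is the bookkeeping around the partiality of the $f_i$ and the cancellation $f_i(f_i^{-1}(P)) = P$. Since each $f_i$ is only a partial surjection, undefined on the new events $N_i$, one must confirm that no new event ever enters the left-hand composition; this holds because the tower $f_n^{-1}(\ldots f_1^{-1}(M_0)\ldots)$ is built entirely from preimages of $\alpha M_0$ and never mentions any $N_i$, so $g_{1,n}$ is defined on its whole alphabet. The one point requiring care is therefore verifying that the renaming law applies verbatim under these alphabet restrictions; once that is granted, the remainder is a routine unwinding of the composite renaming.
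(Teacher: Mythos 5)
Your proof is correct and takes essentially the same route as the paper, whose own proof is a one-line appeal to the refinement in Line~(\ref{result}) together with the CSP law $f(f^{-1}(P)) = P$. You simply make explicit the details the paper leaves implicit: monotonicity of forward renaming under $\sqsubseteq_{TDI}$, the surjectivity of each $f_i$ needed for the cancellation law, and the alphabet bookkeeping ensuring the partial maps are defined on all events surviving the hiding of $NEW$.
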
 
\noindent {\bf Proof:}
This follows from the result in Line~\ref{result} above, using the CSP law $f(f^{-1}(P)) = P$.
\hfill $\Box$ 

\medskip
\noindent 
This result guarantees that Event-B refinement (a) does neither introduce ``new traces on old events'' nor (b) does it introduce divergences on new events. This gives us the precise account of Event-B refinement in terms of CSP which we were aiming at. 

\section{Conclusion} 

In this paper, we have given a CSP account of Event-B refinement.  The approach builds on Butler's semantics for action systems \cite{butlerphd}.  Butler's refinement rules allow new convergent events to be introduced into action systems, so that refinement steps satisfy $M_i \refinedby_{TDI} (M_{i+1} \hide N_{i+1})$, and hiding new events does not introduce divergence.  Abrial's approach to Event-B refinement generalises this approach, allowing new events to be {\em anticipated} as well as {\em convergent}, and also allowing splitting of events.  Our approach to refinement using CSP semantics reflects this generalisation and thus extends Butler's, in order to encompass these different forms of event treatment in Event-B refinement.   We do not yet handle merging events, and this is the subject of current research.  

Recently,  an Event-B$\|$CSP approach has been introduced \cite{DBLP:conf/ifm/SchneiderTW10}. It aims to combine Event-B machine descriptions with CSP \cite{schneider99} control processes, in order to support a more explicit view of control.  In this, it follows previous works on integration of formal methods \cite{butler2000,WoodcockC02,OlderogW05,SchneiderT05,Iliasov09}, which aim at complementing a state-based specification formalism with a process algebra. 

The account of refinement presented here provides the basis for a flexible refinement framework in Event-B$\|$CSP, and this is presented in \cite{stw11:techreport}.  The semantics justifies the introduction of a new status of {\em devolved}, for refinement events which are anticipated in the Event-B machine but convergent in the CSP controller.  This approach has been applied to an initial Event-B$\|$CSP case study of a Bounded Retransmission Protocol \cite{CS1104BRP}.  We aim to develop investigate further case studies.  We are in particular interested in finding out whether the work of showing divergence-freedom (and also deadlock-freedom) can be divided onto the Event-B and CSP part such that for some events convergence is guaranteed by showing the corresponding proof obligations in Event-B while for others we just look at divergence-freedom of the CSP process. The latter part could then be supported by model checking tools for CSP, like FDR \cite{fdr}. 
  
\bibliographystyle{eptcs}
\bibliography{references-eptcs}

\end{document}